\documentclass[times]{elsarticle}
\usepackage{geometry} 
\geometry{a4paper} 
\usepackage{url}
\usepackage{rotating}
\usepackage{longtable}
\usepackage{listings}
\usepackage{etoolbox}
\usepackage{amssymb}
\usepackage{amsthm}

\newcommand{\pred}[1]{\mbox{\textit{#1}}}

\newcommand{\clpif}[1]{\mbox{:-}}
\usepackage{float}
\floatstyle{boxed}
\restylefloat{figure}
\newtheorem{theorem}{Theorem}
\newtheorem{lemma}{Lemma}
\newtheorem{definition}{Definition}

\newcommand{\ptab}{~~~~}
\newcommand{\pif}{\mbox{\textbf{if}}}
\newcommand{\pthen}{\mbox{\textbf{then}}}
\newcommand{\pelse}{\mbox{\textbf{else}}}
\newcommand{\pwhile}{\mbox{\textbf{while}}}
\newcommand{\pdo}{\mbox{\textbf{do}}}
\newcommand{\passign}{\mbox{:=}}
\newcommand{\ptrue}{\mbox{\textbf{true}}}
\newcommand{\pfalse}{\mbox{\textbf{false}}}
\newcommand{\WP}{\mbox{\textsc{wp}}}
\newcommand{\WLP}{\mbox{\textsc{wlp}}}

\title{Comparing Weakest Precondition and Weakest Liberal Precondition}

\author{Andrew E. Santosa}
\ead{dcsandr@nus.edu.sg}

\address{School of Computing, National University of Singapore}

\begin{document}

\begin{abstract}
  In this article we investigate the relationships between the
classical notions of weakest precondition and weakest liberal
precondition, and provide several results, namely that in general,
weakest liberal precondition is neither stronger nor weaker than
weakest precondition, however, given a deterministic and terminating
sequential while program and a postcondition, they are equivalent.
Hence, in such situation, it does not matter which definition is used.
\end{abstract}

\begin{keyword}
programming languages \sep program verification \sep program analysis \sep symbolic execution
\end{keyword}

\maketitle

\section{Introduction}
\label{sec:intro}

Recent years have seen the prevalent usage of \emph{symbolic
  execution\/}~\cite{king76symbolic} for program analysis. Typical
symbolic execution system builds \emph{path conditions\/}
corresponding to execution paths.  A path condition is a constraint
that represents logical relation between the input and output of an
execution path.  Its components are constraints modeling the input and
output relations of each program statement along the execution path. A
path condition can be used to determine the set of inputs that causes
a program to reach an error state. For example, given an array index
$i$ and array size bound $s,$ the path condition represents the
conditions on the input variables that makes array bounds violation
$i\geq s$ holding after executing a path. A constraint solver can be
used to compute an actual program inputs that causes the
violation. There are two well-known notions of the set of inputs
represented by a path and violation conditions: \emph{weakest
  precondition\/} and \emph{weakest liberal precondition}. These
notions are elements of the general notion of \emph{predicate
  transformation\/} introduced in \cite{dijkstra75gcl}. Whereas
weakest and weakest liberal preconditions computes input conditions in
a ``backward'' manner, in the literature, the notion of predicate
transformation also includes a ``forward'' transformation called
\emph{strongest postcondition}.

In this article, we explain how weakest and weakest liberal
preconditions are different. We also explain how that under a very
common condition of deterministic and terminating programs they are
equivalent. In Section \ref{sec:nonequivalence} we provide some
preliminary definitions together with our first result that in
general, weakest and weakest liberal preconditions are not equivalent.
We also present their relationships when the program is deterministic,
and when the program induces a satisfiable transition relation. In
Section \ref{sec:equivalence} we show that given a deterministic and
terminating while program, weakest and weakest liberal preconditions
are the same, and in Section \ref{sec:discussion} we show how to
define weakest liberal precondition in terms of weakest precondition,
and in Section \ref{sec:conclusion} we make some concluding remarks.

\section{Weakest and Weakest Liberal are Not Equivalent}
\label{sec:nonequivalence}

Here we clarify some terminologies. In this article, we adopt the more
common definition of weakest liberal precondition as in
\cite{dijkstra76lang}. However, in some
literature~\cite{bjornerinv97}, weakest liberal precondition is
instead termed weakest precondition. Our definition of weakest liberal
precondition is equivalent to the weakest precondition of
\cite{bjornerinv97}. On the other hand, weakest precondition that we
mean in this article is that of \cite{dijkstra75gcl} or
\cite{dijkstra76lang} which is also sometimes also termed
\emph{pre-image\/} in the literature (cf. the backward CTL decision
procedure in \cite{Huth:ModelChecking00}). Compared to weakest liberal
precondition, the notion of weakest precondition as in
\cite{dijkstra76lang} and \cite{dijkstra75gcl} adds the requirement
that the precondition should guarantee the termination of the
execution.

We now start with some formal definitions. We denote by $\tilde{x}$ a
sequence $x_0, \ldots, x_n$ of (program) variables with some
unspecified $n.$ We abuse the notion of program to also mean any of
its fragments such as, e.g., a statement is also a program. Now, any
program $P$ induces a \emph{transition relation\/} $\rho_P(\tilde{x},
\tilde{x}')$ on free variables $\tilde{x}$ and $\tilde{x}'$, where
$\tilde{x}$ represents the program variables before the transition and
$\tilde{x}'$ represents the program variables after the transition.
For example, an assignment statement $\tilde{x} ~\passign~
f(\tilde{x})$ induces the transition relation $\tilde{x}' =
f(\tilde{x}).$ In general, for any condition $\varphi,$ we write
$\varphi(\tilde{x})$ to clarify that $\tilde{x}$ and no other are the
free variables in $\varphi.$ Given a program $P$ and a postcondition
$\varphi(\tilde{x})$, the weakest liberal precondition of
$\varphi(\tilde{x})$ wrt. $P,$ written
$\WLP(P, \varphi(\tilde{x})),$ is the formula
\[
\forall \tilde{x}' : \rho_P(\tilde{x},\tilde{x}') \rightarrow \varphi(\tilde{x}')
\]
where $\varphi(\tilde{x}')$ is $\varphi(\tilde{x})$ with all
its free variables renamed to their primed versions. On the other hand,
the weakest precondition of $\varphi(\tilde{x})$ wrt. $P,$ written
$\WP(P, \varphi(\tilde{x})),$ is the formula
\[
\exists \tilde{x}' : \rho_P(\tilde{x},\tilde{x}') \wedge \varphi(\tilde{x}')
\]
We remove the subscript $P$ from the transition relation symbol
whenever it is clear from the context.

Weakest liberal precondition and weakest precondition are not
equivalent in general, as stated in the following theorem.
\begin{theorem} \label{theorem:neither}
In general, weakest liberal precondition is neither stronger nor weaker than weakest precondition.
\end{theorem}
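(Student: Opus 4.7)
The plan is to exhibit two concrete programs with postconditions, one showing $\WLP$ can hold where $\WP$ fails and another showing $\WP$ can hold where $\WLP$ fails. Because the definitions differ in two independent ways, namely quantifier ($\forall$ versus $\exists$) and connective ($\rightarrow$ versus $\wedge$), the two failures exploit these two differences separately.

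First I would handle the direction ``$\WLP$ does not imply $\WP$'' by choosing a program $P$ whose transition relation $\rho_P(\tilde{x},\tilde{x}')$ is unsatisfiable in $\tilde{x}'$ for the state of interest, i.e., a non-terminating or blocking program such as $\pwhile~\ptrue~\pdo~\mbox{skip}$. For any postcondition $\varphi(\tilde{x})$, the universally quantified implication defining $\WLP$ is vacuously true, whereas the existentially quantified conjunction defining $\WP$ is false because there is no witness $\tilde{x}'$. Thus $\WLP(P,\varphi)$ holds while $\WP(P,\varphi)$ does not, ruling out ``$\WLP$ stronger than $\WP$''.

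Next I would handle ``$\WP$ does not imply $\WLP$'' by choosing a non-deterministic program $Q$ with multiple successor states, some of which satisfy the postcondition and some of which do not. A convenient choice is a guarded nondeterministic assignment equivalent to $\rho_Q(x,x') \equiv (x'=0 \vee x'=1)$ with postcondition $\varphi(x) \equiv (x = 0)$. Here $\WP(Q,\varphi)$ holds because the witness $x'=0$ satisfies the conjunction, but $\WLP(Q,\varphi)$ fails because the successor $x'=1$ falsifies the implication. This rules out ``$\WLP$ weaker than $\WP$''.

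The only subtlety is presenting the two programs in the syntactic framework of the paper: one must ensure the non-terminating example truly induces an empty (or at least, unsatisfiable in $\tilde{x}'$) transition relation under the intended semantics, and that the nondeterministic example is admitted as a program. Once both examples are in hand, the theorem follows immediately by combining the two failures.
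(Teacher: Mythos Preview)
Your proposal is correct and follows essentially the same approach as the paper: one direction is handled by a program with an unsatisfiable transition relation (you use $\pwhile~\ptrue~\pdo~\mbox{skip}$, the paper uses an aborting statement like \texttt{exit(0)}), and the other by a nondeterministic program (you use a binary nondeterministic assignment, the paper uses \texttt{read()}/\texttt{rand()}). The only cosmetic difference is that the paper first unfolds the negated implications algebraically into $(\exists\tilde{x}':\rho\wedge\varphi)\wedge(\exists\tilde{x}':\rho\wedge\neg\varphi)$ and $(\forall\tilde{x}':\rho\rightarrow\varphi)\wedge(\forall\tilde{x}':\rho\rightarrow\neg\varphi)$ before exhibiting the witnesses, whereas you go straight to the concrete counterexamples.
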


\begin{proof}
If weakest precondition was stronger than weakest liberal
precondition, then the following would be unsatisfiable: 
\[
(\exists \tilde{x}' : \rho(\tilde{x},\tilde{x}') \wedge \varphi(\tilde{x}')) \wedge \neg(\forall \tilde{x}' :
\rho(\tilde{x},\tilde{x}') \rightarrow \varphi(\tilde{x}')). 
\]
This is equivalent to:
\[
(\exists \tilde{x}' : \rho(\tilde{x},\tilde{x}') \wedge \varphi(\tilde{x}')) \wedge (\exists \tilde{x}' : \rho(\tilde{x},\tilde{x}') \wedge \neg\varphi(\tilde{x}')).
\]
There exists some $\rho$ such that this formula is satisfiable, that is,
in case $\rho$ comes from nondeterministic statement. For example, when
$\rho(\tilde{x},\tilde{x}')$ is just a satisfiable constraint $\varphi(\tilde{x})$, which says nothing
about $\tilde{x}'$. More concrete example is when $\rho(\tilde{x},\tilde{x}')$ comes from the
statements \texttt{c=read();} or \texttt{c=rand(seed);} assuming the
\texttt{read()} and \texttt{rand(seed)} can return any value.

On the other hand, if weakest liberal precondition was stronger than
weakest precondition, then the following would be unsatisfiable:
\[
(\forall \tilde{x}' : \rho(\tilde{x},\tilde{x}') \rightarrow \varphi(\tilde{x}')) \wedge \neg(\exists
\tilde{x}' : \rho(\tilde{x},\tilde{x}') \wedge \varphi(\tilde{x}')). 
\]
This is equivalent to:
\[
(\forall \tilde{x}' : \rho(\tilde{x},\tilde{x}') \rightarrow \varphi(\tilde{x}')) \wedge (\forall \tilde{x}' : \rho(\tilde{x},\tilde{x}') \rightarrow \neg\varphi(\tilde{x}')). 
\]
However, also in this case there is some $\rho$ such that the formula is
satisfiable, that is, when $\rho$ is $\pfalse.$ A concrete example
of such $\rho$ is an \texttt{exit(0);} statement in C, or any other
statement that aborts the program.
\end{proof}

\section{Equivalence of Weakest and Weakest Liberal 
for Deterministic and Terminating While Programs}
\label{sec:equivalence}

\begin{theorem} \label{theorem:wpstronger}
  When the transition relation is deterministic, weakest precondition 
  is stronger than weakest liberal precondition.
\end{theorem}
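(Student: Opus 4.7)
The plan is to show the implication $\WP(P,\varphi(\tilde{x})) \rightarrow \WLP(P,\varphi(\tilde{x}))$ is valid whenever $\rho$ is deterministic, where I take the natural formalization of determinism as
\[
\forall \tilde{x},\tilde{x}',\tilde{x}'' :\rho(\tilde{x},\tilde{x}') \wedge \rho(\tilde{x},\tilde{x}'') \rightarrow \tilde{x}' = \tilde{x}''.
\]
First I would fix an arbitrary valuation of $\tilde{x}$ and assume $\WP(P,\varphi(\tilde{x}))$ holds at $\tilde{x}$. Unfolding the definition, this gives a witness $\tilde{x}'_0$ with $\rho(\tilde{x},\tilde{x}'_0) \wedge \varphi(\tilde{x}'_0)$.

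Next I would verify $\WLP(P,\varphi(\tilde{x}))$ by taking an arbitrary $\tilde{x}'$ with $\rho(\tilde{x},\tilde{x}')$ and showing $\varphi(\tilde{x}')$. Applying determinism to the two instances $\rho(\tilde{x},\tilde{x}'_0)$ and $\rho(\tilde{x},\tilde{x}')$ yields $\tilde{x}' = \tilde{x}'_0$, after which $\varphi(\tilde{x}'_0)$ transports to $\varphi(\tilde{x}')$ by substitution. Since $\tilde{x}'$ was arbitrary, the universal form of $\WLP$ follows, completing the argument for an arbitrary $\tilde{x}$.

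The step I expect to be the only subtle one is pinning down the formal meaning of \emph{deterministic}, since the paper has not stated it explicitly before the theorem; I would insert a single line giving the functional-image formulation above and cite it as the intended reading of ``the transition relation is deterministic.'' The remaining reasoning is then purely a one-line manipulation of quantifiers and equality, and in particular does not require any assumption that $\rho$ is total, which is consistent with the converse failure noted in Theorem \ref{theorem:neither} via the $\pfalse$ (aborting) example.
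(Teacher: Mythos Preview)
Your argument is correct and rests on the same underlying idea as the paper's proof---determinism forces any two successors of $\tilde{x}$ to coincide, so a witness for $\WP$ already discharges the universal in $\WLP$. The presentations differ in two respects. First, the paper formalizes determinism by taking $\rho(\tilde{x},\tilde{x}')$ to be literally $\tilde{x}'=f(\tilde{x})$ for a deterministic (total) function $f$, and then proceeds by contradiction: it rewrites $\WP \wedge \neg\WLP$ as $(\exists \tilde{x}':\rho\wedge\varphi)\wedge(\exists \tilde{x}':\rho\wedge\neg\varphi)$ and substitutes $f(\tilde{x})$ for $\tilde{x}'$ to obtain $\varphi(f(\tilde{x}))\wedge\neg\varphi(f(\tilde{x}))$. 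Your relational uniqueness axiom is strictly more general---it covers partial (non-total) $\rho$ as well---and your direct natural-deduction style avoids the detour through unsatisfiability. Second, because the paper's functional form silently builds in totality, your explicit remark that totality is not needed is a genuine sharpening; it cleanly separates the determinism hypothesis used here from the satisfiability hypothesis used in Theorem~\ref{theorem:wlpstronger}.
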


\begin{proof}
  We can infer this from the proof of Theorem \ref{theorem:neither}
  above.  More formally, we show this by proving that the following is
  unsatisfiable when $\rho(\tilde{x},\tilde{x}')$ is $\tilde{x}' = f(\tilde{x})$ for some
  deterministic function $f$:
\[
(\exists \tilde{x}' : \rho(\tilde{x},\tilde{x}') \wedge \varphi(\tilde{x}')) \wedge \neg(\forall \tilde{x}' :
\rho(\tilde{x},\tilde{x}') \rightarrow \varphi(\tilde{x}')). 
\]
This is equivalent to:
\[
(\exists \tilde{x}' : \rho(\tilde{x},\tilde{x}') \wedge \varphi(\tilde{x}')) \wedge (\exists \tilde{x}' :
\rho(\tilde{x},\tilde{x}') \wedge \neg\varphi(\tilde{x}')). 
\]
Substituting $\rho(\tilde{x},\tilde{x}')$ with $\tilde{x}'=f(\tilde{x})$ we have: $\varphi(f(\tilde{x}))
\wedge \neg\varphi(f(\tilde{x}))$, 
which is unsatisfiable if f is deterministic. 
\end{proof}

\begin{theorem} \label{theorem:wlpstronger}
  When the transition relation is satisfiable, weakest liberal 
  precondition is stronger than weakest precondition. 
\end{theorem}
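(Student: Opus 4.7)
The plan is to mirror the proof of Theorem \ref{theorem:neither}, showing that under the satisfiability assumption the conjunction of $\WLP$ with the negation of $\WP$ is unsatisfiable. That is, I would first write down
\[
(\forall \tilde{x}' : \rho(\tilde{x},\tilde{x}') \rightarrow \varphi(\tilde{x}')) \wedge \neg(\exists \tilde{x}' : \rho(\tilde{x},\tilde{x}') \wedge \varphi(\tilde{x}'))
\]
and push the outer negation inward, turning the second conjunct into the universal formula $\forall \tilde{x}' : \rho(\tilde{x},\tilde{x}') \rightarrow \neg\varphi(\tilde{x}')$, exactly as in the second half of the proof of Theorem \ref{theorem:neither}.

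Next, I would merge the two universal quantifiers (they bind the same variable and both conjuncts are implications with the same antecedent $\rho(\tilde{x},\tilde{x}')$) to obtain
\[
\forall \tilde{x}' : \rho(\tilde{x},\tilde{x}') \rightarrow (\varphi(\tilde{x}') \wedge \neg\varphi(\tilde{x}')).
\]
Since the consequent is identically $\pfalse$, this collapses to $\forall \tilde{x}' : \neg\rho(\tilde{x},\tilde{x}')$, equivalently $\neg\exists \tilde{x}' : \rho(\tilde{x},\tilde{x}')$, i.e.\ $\rho$ is unsatisfiable. Under the hypothesis of the theorem $\rho$ is satisfiable, so we have the desired contradiction, and the conjunction is unsatisfiable, which means $\WLP$ implies $\WP$.

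The only genuinely subtle point, and the main thing I would flag for the reader, is the interpretation of \emph{satisfiable} for a two-argument relation. For the implication $\WLP \Rightarrow \WP$ to hold at a particular pre-state $\tilde{x}$, what is actually required is that $\rho(\tilde{x},\cdot)$ has at least one image, i.e.\ $\exists \tilde{x}' : \rho(\tilde{x},\tilde{x}')$ for that $\tilde{x}$; otherwise $\WLP$ holds vacuously while $\WP$ fails. So ``$\rho$ is satisfiable'' should be read as totality of the transition relation in $\tilde{x}'$ for every $\tilde{x}$ of interest. With this reading, the contradiction derived above goes through uniformly; without it, one would have to restrict the equivalence to pre-states from which at least one successor exists. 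No other difficulty is anticipated, as the argument is purely propositional once the quantifiers are combined.
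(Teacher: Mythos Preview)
Your argument is correct and mirrors the paper's proof almost exactly: both reduce $\WLP \wedge \neg\WP$ to $\forall \tilde{x}' : \rho(\tilde{x},\tilde{x}') \rightarrow (\varphi(\tilde{x}') \wedge \neg\varphi(\tilde{x}'))$ and then invoke satisfiability of $\rho$ to obtain a contradiction. Your closing remark that ``satisfiable'' really needs to be read as totality in $\tilde{x}'$ for each pre-state $\tilde{x}$ is a genuine sharpening that the paper glosses over.
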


\begin{proof}
We can infer this from the proof of Theorem \ref{theorem:neither}
above. More formally, we proceed by showing a contradiction that
\begin{equation} \label{eqn:contradict2}
\WLP(P, \varphi(\tilde{x})) \not\rightarrow
\WP(P, \varphi(\tilde{x}))
\end{equation}
is unsatisfiable in case $\rho(\tilde{x},\tilde{x}')$ is satisfiable.
It is easy to see that (\ref{eqn:contradict2}) is equivalent to:
\[
\forall \tilde{x}' : \rho(\tilde{x},\tilde{x}') \rightarrow (\varphi(\tilde{x}') \wedge \neg\varphi(\tilde{x}'))
\]
which is absurd as $\varphi(\tilde{x}') \wedge
\neg\varphi(\tilde{x}')$ is $\pfalse$ and
$\rho(\tilde{x},\tilde{x}') \not\rightarrow \pfalse.$
\end{proof}

In the special case of sequential programs, since the weakest liberal
precondition is actually equivalent to weakest precondition. Following
is the proof why, for sequential programs, weakest liberal
precondition is equivalent to weakest precondition.

\begin{definition}
  A deterministic sequential while program may contain assignments, if
  conditionals, and while loops, and their sequential compositions in
  the usual manner. In addition, for any assignment $\tilde{x} ~\passign~ f(\tilde{x})$, $f$
  is a deterministic function.
\end{definition}

Let us now examine the transition relation induced by each of the
statement of a deterministic sequential while program:
\begin{enumerate}
\item For an assignment $\tilde{x} ~\passign~ f(\tilde{x})$, the transition relation $\rho(\tilde{x},\tilde{x}')$ is $\tilde{x}'=f(\tilde{x})$.
\item For an if conditional
\[
\pif ~ \varphi(\tilde{x}) ~\pthen ~P~ \pelse~ P'
\]
when the transition relation for $P$ is $\rho_P(\tilde{x},\tilde{x}')$ and the
transition relation for $P'$ is $\rho_{P'}(\tilde{x},\tilde{x}')$, the transition relation
$\rho(\tilde{x},\tilde{x}')$ induced by the if conditional is
\[
(\varphi(\tilde{x}) \wedge \rho_{P}(\tilde{x},\tilde{x}')) \vee (\neg \varphi(\tilde{x}) \wedge \rho_{P'}(\tilde{x},\tilde{x}'))
\]
\item For a while loop
\[
\pwhile ~ \varphi(\tilde{x})~ \pdo ~P
\]
when the transition relation for $P$ is $\rho_P(\tilde{x},\tilde{x}')$, then the
transition relation for the while loop is the infinite formula
\[
\begin{array}{c}
(\neg \varphi(\tilde{x}) \wedge \tilde{x}'=\tilde{x}) \vee\\
(\varphi(\tilde{x}) \wedge \rho_P(\tilde{x},\tilde{x}_1) \wedge \neg \varphi(\tilde{x}_1) \wedge \tilde{x}'=\tilde{x}_1) \vee\\
(\varphi(\tilde{x}) \wedge \rho_P(\tilde{x},\tilde{x}_1) \wedge \varphi(\tilde{x}_1) \wedge \rho_P(\tilde{x}_1,\tilde{x}_2) \wedge\neg
\varphi(\tilde{x}_2)\wedge \tilde{x}'=\tilde{x}_2) \vee\\
\ldots
(\varphi(\tilde{x}) \wedge \rho_P(\tilde{x},\tilde{x}_1) \wedge (\bigwedge_{i=2}^n : \varphi(\tilde{x}_{i-1}) \wedge
\rho_P(\tilde{x}_{i-1},\tilde{x}_i)) \wedge \neg \varphi(\tilde{x}_n) \wedge \tilde{x}'=\tilde{x}_n) \vee\\
\ldots
\end{array}
\]
or,
\[
\bigvee_{i=0}^{\infty} (\exists \tilde{x}_0,\ldots ,\tilde{x}_i :
(\bigwedge_{j=0}^{i-1} (\varphi(\tilde{x}_j) \wedge \rho_P(\tilde{x}_j,\tilde{x}_{j+1})) \wedge \neg \varphi(\tilde{x}_i) \wedge \tilde{x}'=\tilde{x}_i \wedge \tilde{x}=\tilde{x}_0))
\]
It is important to note here that for any nonterminating program $P$, $\neg
\varphi(\tilde{x}_i)$ for all $i$ is unsatisfiable, hence $\rho_P(\tilde{x},\tilde{x}')$ is $\pfalse.$
\end{enumerate}

Note that a deterministic while program induces a transition relation
that is always satisfiable, since if and while conditionals construct
two guarded program paths which guards are opposite of each
other. Hence, given a program execution state, both guards cannot be
unsatisfiable. Since a deterministic while program is both
deterministic and has transition relation that is always satisfiable,
Theorems \ref{theorem:wpstronger} and \ref{theorem:wlpstronger} seem
to have already suggested that a deterministic while program would
have equivalent weakest liberal precondition and weakest precondition,
however, here we will proceed more formally and carefully.
\begin{lemma} \label{lemma:equivalent}
The weakest liberal precondition of an assignment is equivalent to its weakest precondition.
\end{lemma}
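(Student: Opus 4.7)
The plan is to exploit both of the earlier theorems simultaneously. The assignment $\tilde{x} \passign f(\tilde{x})$ induces $\rho(\tilde{x},\tilde{x}') \equiv \tilde{x}' = f(\tilde{x})$. By the definition of a deterministic sequential while program, $f$ is a deterministic function, so the transition relation is deterministic. Moreover, for every valuation of $\tilde{x}$ the witness $\tilde{x}' = f(\tilde{x})$ exists, hence $\rho$ is satisfiable. Therefore Theorem~\ref{theorem:wpstronger} gives $\WP \rightarrow \WLP$ and Theorem~\ref{theorem:wlpstronger} gives $\WLP \rightarrow \WP$, yielding equivalence.

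As a sanity check (and to make the equivalence more concrete rather than relying on the two abstract theorems), I would also compute the two sides directly. Substituting $\rho(\tilde{x},\tilde{x}')$ with $\tilde{x}' = f(\tilde{x})$ in the definition of $\WP$ yields
\[
\exists \tilde{x}' : \tilde{x}' = f(\tilde{x}) \wedge \varphi(\tilde{x}'),
\]
which by the standard one-point rule collapses to $\varphi(f(\tilde{x}))$. The same substitution in $\WLP$ yields
\[
\forall \tilde{x}' : \tilde{x}' = f(\tilde{x}) \rightarrow \varphi(\tilde{x}'),
\]
which by the dual one-point rule also collapses to $\varphi(f(\tilde{x}))$. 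So both normal forms coincide syntactically.

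I do not anticipate any real obstacle here; the only thing to be slightly careful about is citing the one-point rule (or, equivalently, justifying the elimination of the quantifier against an equality binding $\tilde{x}'$) rather than waving it through, and noting explicitly that determinism of $f$ is what lets us treat $f(\tilde{x})$ as a well-defined term inside $\varphi$. Given the two prior theorems, I would lead with the short appeal to them and keep the direct computation as a confirming remark, since this keeps the proof in the same style as Theorems~\ref{theorem:wpstronger} and \ref{theorem:wlpstronger}.
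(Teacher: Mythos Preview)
Your proposal is correct. The paper's own proof is precisely your ``sanity check'': it writes down $\forall \tilde{x}' : \tilde{x}'=f(\tilde{x}) \rightarrow \varphi(\tilde{x}')$ and $\exists \tilde{x}' : \tilde{x}'=f(\tilde{x}) \wedge \varphi(\tilde{x}')$ and observes that each collapses to $\varphi(f(\tilde{x}))$ for deterministic $f$. It does not invoke Theorems~\ref{theorem:wpstronger} and~\ref{theorem:wlpstronger} at all in the proof of the lemma.

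Your lead-in via those two theorems is a legitimate alternative, and in fact the paper itself remarks, just before stating the lemma, that the two theorems ``seem to have already suggested'' the equivalence but then elects to ``proceed more formally and carefully'' with the direct computation. What your route buys is uniformity with the style of Theorems~\ref{theorem:wpstronger} and~\ref{theorem:wlpstronger}; what the paper's route buys is a self-contained one-line argument that does not depend on the slightly informal notion of ``satisfiable'' in Theorem~\ref{theorem:wlpstronger} (which, as you implicitly recognise by saying ``for every valuation of $\tilde{x}$ the witness $\tilde{x}'=f(\tilde{x})$ exists,'' really needs totality of $\rho$ in $\tilde{x}$ rather than mere satisfiability). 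Since you include the direct computation anyway, there is no gap.
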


\begin{proof}
  Given an assignment $\tilde{x} ~\passign~ f(\tilde{x})$ and a postcondition $\varphi(\tilde{x})$, the
  weakest liberal precondition is
\[
\forall \tilde{x}' : \tilde{x}'=f(\tilde{x}) \rightarrow \varphi(\tilde{x}')
\]
and the weakest precondition is 
\[
\exists \tilde{x}' : \tilde{x}'=f(\tilde{x}) \wedge \varphi(\tilde{x}'),
\]
each one is equivalent to $\varphi(f(\tilde{x})),$ given $f$ deterministic
function.
\end{proof}

\begin{lemma} \label{lemma:sequence}
  When for each program $P$ and $P',$ the weakest liberal
  precondition is equivalent to the weakest precondition given any
  postcondition, then given a postcondition $\varphi(\tilde{x}),$
  the sequence $P P'$ has equivalent weakest liberal precondition and
  weakest precondition.
\end{lemma}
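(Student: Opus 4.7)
The plan is to reduce sequential composition to nested predicate transformations and then invoke the hypothesis twice. The key observation is that the transition relation of the sequence $PP'$ is
\[
\rho_{PP'}(\tilde{x},\tilde{x}') \equiv \exists \tilde{x}'' : \rho_P(\tilde{x},\tilde{x}'') \wedge \rho_{P'}(\tilde{x}'',\tilde{x}'),
\]
since the intermediate state $\tilde{x}''$ after $P$ must be a valid starting state for $P'$. Unfolding the definitions of $\WLP$ and $\WP$ with this relation is the only real calculation the proof needs.

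First I would establish the standard decomposition identities. Substituting the composite $\rho_{PP'}$ into the definition of $\WLP(PP',\varphi(\tilde{x}))$ gives $\forall \tilde{x}' : (\exists \tilde{x}'' : \rho_P(\tilde{x},\tilde{x}'') \wedge \rho_{P'}(\tilde{x}'',\tilde{x}')) \rightarrow \varphi(\tilde{x}')$; a routine pull of the existential across the implication and a swap of the two universals rewrites this as $\forall \tilde{x}'' : \rho_P(\tilde{x},\tilde{x}'') \rightarrow (\forall \tilde{x}' : \rho_{P'}(\tilde{x}'',\tilde{x}') \rightarrow \varphi(\tilde{x}'))$, which is exactly $\WLP(P, \WLP(P',\varphi(\tilde{x})))$. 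An entirely analogous manipulation, using the fact that $\exists$ distributes over $\wedge$ when the bound variable is fresh, yields $\WP(PP',\varphi(\tilde{x})) \equiv \WP(P, \WP(P',\varphi(\tilde{x})))$.

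Given these two identities, the conclusion is immediate. By the hypothesis applied to $P'$ with postcondition $\varphi(\tilde{x})$, the formulas $\WLP(P',\varphi(\tilde{x}))$ and $\WP(P',\varphi(\tilde{x}))$ are equivalent; let $\psi(\tilde{x})$ denote either of them. Applying the hypothesis again to $P$ with postcondition $\psi(\tilde{x})$ gives $\WLP(P,\psi(\tilde{x})) \equiv \WP(P,\psi(\tilde{x}))$. Chaining this with the decomposition identities from the previous paragraph yields $\WLP(PP',\varphi(\tilde{x})) \equiv \WP(PP',\varphi(\tilde{x}))$.

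The only delicate point is the quantifier manipulation in the decomposition step, in particular making sure that $\tilde{x}''$ in $\WLP(P',\varphi(\tilde{x}''))$ and $\WP(P',\varphi(\tilde{x}''))$ plays the role of the ``initial state'' for the second program, so that the outer $\WLP(P,\cdot)$ and $\WP(P,\cdot)$ are being applied to a postcondition in the variables of the state after $P$. Once the variable roles are tracked carefully, the rest of the argument is mechanical, and I do not anticipate any further obstacle.
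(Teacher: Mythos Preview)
Your proposal is correct and follows essentially the same two-step application of the hypothesis as the paper: first to $P'$ with postcondition $\varphi(\tilde{x})$ to obtain an intermediate condition $\psi(\tilde{x})$, then to $P$ with postcondition $\psi(\tilde{x})$. The only difference is that you explicitly derive the compositional identities $\WLP(PP',\varphi)\equiv\WLP(P,\WLP(P',\varphi))$ and $\WP(PP',\varphi)\equiv\WP(P,\WP(P',\varphi))$ from the sequential transition relation, whereas the paper's proof simply uses these implicitly without justification; in that sense your argument is more complete.
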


\begin{proof}
  Given the postcondition $\varphi(\tilde{x}),$ the weakest
  liberal precondition of of $\varphi(\pred{x})$ wrt. $P'$ is
  $\pred{Pre}_{P'}$, which is necessarily equivalent to the weakest
  precondition of $\varphi(\pred{x})$ wrt. $P'$. Now, given
  $\pred{Pre}_{P'}$ as postcondition, the weakest liberal precondition
  and weakest precondition of $\pred{Pre}_{P'}$ wrt. $P$ are
  necessarily equivalent from our assumption that for any
  postcondition $\varphi$ and program $P,$ $\WLP(P,\varphi)
  \equiv \WP(P', \varphi).$
\end{proof}

\begin{theorem} \label{theorem:deterministic}
  Given a deterministic and terminating sequential while program $P$
  and a postcondition, the weakest liberal precondition of the program 
  wrt. the postcondition is equivalent to the weakest precondition of 
  the program wrt. the postcondition. 
\end{theorem}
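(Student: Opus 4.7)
The plan is to proceed by structural induction on the program $P$, following the grammar of deterministic sequential while programs. Lemma \ref{lemma:equivalent} handles the base case of an assignment, and Lemma \ref{lemma:sequence} handles the inductive step for sequential composition. What remains is to treat the if-conditional and the while-loop cases under the inductive hypothesis that the equivalence holds for their constituent subprograms.

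For an if-conditional with guard $\psi(\tilde{x})$ and branches $P_1$, $P_2$, I would exploit the fact that the guard mentions only $\tilde{x}$ and not $\tilde{x}'$. Substituting the induced transition relation into the definitions and distributing the quantifiers over the disjunction, $\WP$ simplifies to $(\psi(\tilde{x}) \wedge \WP(P_1,\varphi)) \vee (\neg\psi(\tilde{x}) \wedge \WP(P_2,\varphi))$, while $\WLP$ simplifies to $(\psi(\tilde{x}) \rightarrow \WLP(P_1,\varphi)) \wedge (\neg\psi(\tilde{x}) \rightarrow \WLP(P_2,\varphi))$. Because $\psi(\tilde{x})$ is classically either true or false at any fixed state, both formulas collapse to the precondition of the single active branch, which by the inductive hypothesis is the same whether computed as $\WP$ or $\WLP$.

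The main obstacle is the while loop, whose induced transition relation is an infinite disjunction indexed by the iteration count. Here the strategy is to leverage both determinism and termination together: for any starting state $\tilde{x}$, there exists a unique finite $k$ such that the $k$-th disjunct (and only the $k$-th) is satisfiable, accompanied by a unique trace of intermediate states $\tilde{x}_1,\ldots,\tilde{x}_k$. Existence of some such $k$ comes from termination of the loop, and uniqueness of $k$ and of the trace comes from determinism of the body. Consequently $\WP$ collapses via that unique witness to $\varphi(\tilde{x}_k)$, and $\WLP$ collapses to the same formula because every other disjunct contributes a vacuously satisfied implication. I would formalize this by first invoking the inductive hypothesis on the loop body $P$ to eliminate each inner occurrence of $\rho_P$, and then arguing by a secondary induction on the (finite) iteration count $k$ that the infinite disjunction reduces to a single condition.

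The delicate part is precisely this collapse of the infinite disjunction. Interleaving the universal quantifier of $\WLP$ with the nested existentials over the intermediate states in each disjunct, and showing that the non-active disjuncts indeed contribute nothing, is where the bulk of the technical effort lies; this is also the unique place in the proof where the termination hypothesis is actually consumed, since without it the transition relation of the loop degenerates to $\pfalse$ as noted after the while-loop definition.
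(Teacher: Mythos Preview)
Your overall architecture---structural induction on the program, with Lemma~\ref{lemma:equivalent} for assignments, Lemma~\ref{lemma:sequence} for sequencing, and the two-branch collapse for the if conditional---is exactly the paper's, and your treatment of the if case matches the paper's computation almost line for line.

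The while case is where you and the paper diverge. The paper does not analyse the infinite disjunction directly; instead it observes that a terminating loop is semantically equivalent to a finite nest of if conditionals and simply reuses the if case already established, together with Lemma~\ref{lemma:sequence}, to conclude. This is lighter: no secondary induction on the iteration count, no explicit manipulation of the interleaved quantifiers you flag as ``the delicate part''. Your route---fix a state, use termination to obtain a finite $k$, use determinism to obtain a unique trace, and show that both $\WP$ and $\WLP$ evaluate to $\varphi(\tilde{x}_k)$---is also sound, and has the virtue of making explicit \emph{why} the equivalence holds at the level of the transition relation (it is total and functional, so Theorems~\ref{theorem:wpstronger} and~\ref{theorem:wlpstronger} both apply). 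The paper's reduction hides this behind the syntactic rewriting.

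One point to tighten: you write that you would ``invoke the inductive hypothesis on the loop body $P$ to eliminate each inner occurrence of $\rho_P$''. The inductive hypothesis you have set up is $\WP(P,\psi)\equiv\WLP(P,\psi)$ for every $\psi$; that statement says nothing that lets you rewrite $\rho_P$ inside the infinite disjunction. What your semantic argument actually needs is that $\rho_P$ itself is functional (and total), which follows from the body being deterministic and terminating---but that is a separate structural induction (on the shape of $\rho$) rather than the $\WP/\WLP$ induction you are running. Once you have functionality of $\rho_P$, the unique-trace argument goes through without ever appealing to the $\WP/\WLP$ hypothesis for the body. So either switch the induction invariant for the loop case to ``$\rho$ is total and functional'', or follow the paper and unfold the loop into nested ifs so that the $\WP/\WLP$ hypothesis is what gets used at each layer.
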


\begin{proof}
  We prove inductively. When $P$ is just a sequence of assignments,
  from Lemma \ref{lemma:equivalent} and Lemma \ref{lemma:sequence} we
  obtain the desired result.

  Now let us assume $P$ to be an if conditional, say of the form
\[
\pif ~\varphi(\tilde{x})~ \pthen ~P~ \pelse ~P'
\]
As our induction hypothesis, we also assume that both $P$ and $P'$ have
equivalent weakest liberal precondition and weakest precondition given
any postcondition. Now suppose that the postcondition of the statement
is $\varphi.$ Recall that the transition relation of an if conditional is
\[
(\varphi(\tilde{x}) \wedge \rho_P(\tilde{x},\tilde{x}')) \vee (\neg \varphi(\tilde{x}) \wedge \rho_{P'}(\tilde{x},\tilde{x}'))
\]
The weakest liberal precondition of the if condition, given
$\varphi$ as postcondition is therefore
\[
(\forall \tilde{x}' : ((\varphi(\tilde{x}) \wedge \rho_{P}(\tilde{x},\tilde{x}')) \vee (\neg \varphi(\tilde{x}) \wedge
\rho_{P'}(\tilde{x},\tilde{x}'))) \rightarrow \varphi(\tilde{x}'))
\]
which is equivalent to
\[
(\varphi(\tilde{x}) \rightarrow (\forall \tilde{x}' : \rho_P(\tilde{x},\tilde{x}') \rightarrow
\varphi(\tilde{x}'))) \wedge (\neg \varphi(\tilde{x}) \rightarrow (\forall \tilde{x}' :
\rho_{P'}(\tilde{x},\tilde{x}') \rightarrow \varphi(\tilde{x}')))
\]
Note that in the above,
\[
\forall \tilde{x}' : \rho_P(\tilde{x},\tilde{x}') \rightarrow \varphi(\tilde{x}')
\]
and
\[
\forall \tilde{x}' : \rho_{P'}(\tilde{x},\tilde{x}') \rightarrow \varphi(\tilde{x}')
\]
are the weakest liberal preconditions of $\varphi(\tilde{x})$
wrt. respectively $P$ and $P'$. We name them
$\pred{Pre}_{P}(\tilde{x})$ and $\pred{Pre}_{P'}(\tilde{x})$,
respectively, obtaining (\ref{eqn:one}) below:
\begin{equation} \label{eqn:one}
(\varphi(\tilde{x}) \rightarrow \pred{Pre}_P(\tilde{x})) \wedge (\neg \varphi(\tilde{x}) \rightarrow \pred{Pre}_{P'}(\tilde{x}))
\end{equation}
Now the weakest precondition of $\varphi$ wrt. the $\pif$ condition, is:
\[
(\exists \tilde{x}': (\varphi(\tilde{x}) \wedge \rho_P(\tilde{x},\tilde{x}')) \vee (\neg \varphi(\tilde{x}) \wedge
\rho_{P'}(\tilde{x},\tilde{x}')) \wedge \varphi(\tilde{x}'))
\]
which is equivalent to
\[
(\varphi(\tilde{x}) \wedge (\exists \tilde{x}': \rho_P(\tilde{x},\tilde{x}') \wedge \varphi(\tilde{x}'))) \vee (\neg \varphi(\tilde{x}) \wedge (\exists \tilde{x}': \rho_{P'}(\tilde{x},\tilde{x}’) \wedge \varphi(\tilde{x}')))
\]
Since the weakest precondition and weakest liberal preconditions of
$P$ and $P'$ are equivalent, we get:
\[
(\varphi(\tilde{x}) \wedge \pred{Pre}_P(\tilde{x})) \vee (\neg \varphi(\tilde{x}) \wedge \pred{Pre}_{P'}(\tilde{x}))
\]
This is equivalent to (\ref{eqn:one}).

While loop of the syntax
\[
\pwhile ~\varphi(\tilde{x})~ \pdo ~P
\]
has the same semantics as the following infinite program consisting of
if conditionals.
\[
\begin{array}{l}
\pif ~\varphi(\tilde{x})~ \pthen\\
\ptab P\\
\ptab \pif ~\varphi(\tilde{x})~ \pthen\\
\ptab \ptab P\\
\ptab \ptab \ldots
\end{array}
\]
The infinite programs exactly induces the same transition relation as
the while loop presented above. Due to termination assumption, the
same while loop can be written using a finite number of if
conditionals (from the first if conditional up to the last (innermost)
if conditional where $\varphi(\tilde{x})$ is $\pfalse$). More importantly, the while
loop induces a transition relation that is satisfiable (not
$\pfalse$), that is, there is a possible execution from the point
before the loop to the point right after the loop. Since one if
conditional preserves the equivalence of weakest liberal precondition
and weakest precondition, as above, so does terminating while loops
(which are representable as finite number of ifs).
\end{proof}

\section{Discussion}
\label{sec:discussion}

It is easy to see that the following relationship holds between 
weakest liberal precondition and weakest precondition, where the 
weakest liberal precondition 
\[
\forall \tilde{x}' : \rho(\tilde{x},\tilde{x}') \rightarrow \varphi(\tilde{x}') 
\]
is actually equivalent to the negation of the weakest precondition of 
the negated postcondition. 
\[
\neg(\exists \tilde{x}' : \rho(\tilde{x},\tilde{x}') \wedge \neg\varphi(\tilde{x}')). 
\]
This fact has been mentioned by Bourdoncle in his abstract debugging 
approach \cite{bourdoncle93debug}, where he introduced two kinds of 
assertions to be guaranteed by a correctly running programs:
\emph{always\/} assertions and \emph{eventually\/} assertions. The 
proofs of both require program state-space exploration using backward 
fixpoint computations. The state-space exploration of the always 
assertions employ weakest liberal precondition while the state-space 
exploration of the eventually assertions employ weakest 
precondition. The intuitive relations between both assertions is that,
if suppose that we had an always assertion of some program correctness 
condition, and if the assertion holds, then in no circumstance that a 
program state where that assertion is violated can be eventually 
reached. That is, it is \emph{not\/} the case that a \emph{negation\/}
of the correctness condition eventually holds. 

Weakest precondition guarantees the total correctness of 
a Hoare's triples $\{\pred{Pre}\} ~ S ~ \{\varphi\}$, where 
$\pred{Pre}$ is a precondition, $\varphi$ a postcondition, and $S$
a statement. The notion of weakest liberal precondition, on the other 
hand, guarantees only partial correctness of the triples, where the 
postcondition is guaranteed to hold only when the statement was 
executed successfully. 

As a note, we can define weakest liberal precondition using weakest
precondition.  This does not mean, however, that we cannot implement
weakest liberal precondition propagation indirectly using weakest
precondition computation. Note that in a sequence $P P'$ the
weakest liberal precondition of a condition $\varphi(\tilde{x})$ wrt. the program $P'$ is
$\WLP(\varphi(\tilde{x}), P')$, which is equivalent to $\forall \tilde{x}'' :
(\rho_{P'}(\tilde{x},\tilde{x}'') \rightarrow \varphi(\tilde{x}''))$, where $\rho_{P'}$ is
the state transition relation defined by the program $P'.$ Now, the
weakest liberal precondition of the sequence is
\[
\forall \tilde{x}' : \rho_P(\tilde{x},\tilde{x}') \rightarrow (\forall \tilde{x}'' : (\rho_{P'}(\tilde{x}',\tilde{x}'') \rightarrow \varphi(\tilde{x}'')))
\]
which is equivalent to
\[
\forall \tilde{x}', \tilde{x}'' : (\rho_P(\tilde{x},\tilde{x}') \wedge \rho_{P'}(\tilde{x}',\tilde{x}'')) \rightarrow \varphi(\tilde{x}'').
\]
Notice that $\rho_P(\tilde{x},\tilde{x}') \wedge \rho_{P'}(\tilde{x}',\tilde{x}'')$ is $\WP(P P', \ptrue).$

\section{Concluding Remarks}
\label{sec:conclusion}

The semantics of the guarded commands language introduced in
\cite{dijkstra75gcl} embeds the notion of termination. In
\cite{dijkstra75gcl}, weakest precondition has to satisfy an
additional condition $Q$ (satisfiability of at least one guard in case
of guarded ifs, and a measure for the termination of a guarded loop),
which ensures the termination of the statement. However, $Q$ does not
exclude nondeterminism, and therefore from Theorems
\ref{theorem:neither}, \ref{theorem:wpstronger}, and
\ref{theorem:wlpstronger}, we infer that the notion of weakest
precondition and $Q$ in \cite{dijkstra76lang} is stronger than the
notion of weakest precondition used in this article.

We note that in this article, we have considered \emph{value\/}
nondeterminism of functions, while \cite{dijkstra75gcl} consider
\emph{control\/} nondeterminism where multiple guards can be true at
the same time and the semantics does not specify which branch is
taken. However, control nondeterminism can always be modeled using
value nondeterminism by having some guards which depend on random
value.

\end{document}